\def\vec#1{\underline{#1}}
\def\mat#1{{\mathbf #1}}
\def\1_2{{\frac{1}{2}}}
\DeclareMathOperator{\erf}{erf}
\DeclareMathOperator{\erfinv}{erfinv}
\def\rv#1{\boldsymbol{#1}}
\DeclareMathOperator{\diag}{diag}
\DeclareMathOperator{\E}{E}
\DeclareMathOperator{\HeaviSide}{H}
\def\given{\, | \,}
\newcommand{\pluseq}{\mathrel{+}=}
\newcommand{\diveq}{\mathrel{/}=}
\def\ge{\geqslant}
\def\DeltaX#1{\hat{\vec{\Delta}}_{#1}}
\def\d{\mathrm{d}}
\def\NewR{\mathbb{R}} 
\def\NewS{\mathbb{S}}
\def\Eq#1{(\ref{#1})}
\def\Sec#1{Sec.~\ref{#1}}
\def\SubSec#1{Subsec.~\ref{#1}}
\def\Fig#1{Fig.~\ref{#1}}
\newlength\EqLen
\def\ScaleInner#1{%
\settowidth{\EqLen}{#1}
\ifdim\EqLen < \columnwidth%
  \begin{equation*}%
    \begin{minipage}{\EqLen}#1\end{minipage}%
  \end{equation*}%
\else%
  \begin{equation*}%
    \resizebox{0.99\columnwidth}{!}{\begin{minipage}{\EqLen}#1\end{minipage}}%
  \end{equation*}%
\fi%
}%
\def\Scale#1
\def\LongVersion#1{}
\def\citep#1{(\cite{#1})}
\renewenvironment{proof}[1][\proofname] {\par\pushQED{\qed}\normalfont\topsep6\p@\@plus6\p@\relax\trivlist\item[\hskip\labelsep\bfseries\slshape#1\@addpunct{.}]\ignorespaces}{\popQED\endtrivlist\@endpefalse}
\newenvironment{theorem}
  {\pushQED{\qed}\theoremx}
  {\popQED\endtheoremx}
\newenvironment{remark}
  {\pushQED{\qed}\remarkx}
  {\popQED\endremarkx}
\newenvironment{example}
  {\pushQED{\qed}\examplex}
  {\popQED\endexamplex}
\newcommand\SaveEquation[2]{\@namedef{equation@#1}{#2}}
\newcommand\UseEquation[1]{\@nameuse{equation@#1}}
\date{}
\begin{document}

\begin{frontmatter}

\title{Deterministic Sampling of Multivariate Densities\\
based on Projected Cumulative Distributions
}

\author{Uwe~D.~Hanebeck}

\address{%
	Intelligent Sensor-Actuator-Systems Laboratory (ISAS)\\
	Institute for Anthropomatics and Robotics\\
	Karlsruhe Institute of Technology (KIT), Germany\\
	email: Uwe.Hanebeck@kit.edu
	}%

\begin{abstract}
	%
%
We want to approximate general multivariate probability density functions by deterministic sample sets.
%
%
For optimal sampling, the closeness to the given continuous density has to be assessed. This is a difficult challenge in multivariate settings. Simple solutions are restricted to the one-dimensional case.
%
%
In this paper, we propose to employ one-dimensional density projections.
%
%
These are the Radon transforms of the densities.
%
%
For every projection, we compute their cumulative distribution function.
%
%
These Projected Cumulative Distributions (PCDs) are compared for all possible projections (or a discrete set thereof). This leads to a tractable distance measure in multivariate space.
%
%
The proposed approximation method is efficient as calculating the distance measure mainly entails sorting in one dimension. It is also surprisingly simple to implement.

\end{abstract}

\end{frontmatter}

\section{Introduction} \label{Sec_Intro}

    %
%
Approximating a given probability density function by another one is a ubiquitous problem.
%
%
It arises when the given density is too complex for further processing. This could be caused by an undesired and complicated functional representation. Even if the representation is of a desired form, the number of parameters could be unnecessarily large, e.g., Gaussian mixtures.
%
%
Hence, it has to be replaced by a more convenient representation or one with less parameters.

%
%
Recursive Bayesian filtering is an important use case for density approximation. For nonlinear system models, the prediction step is usually infeasible for continuous prior densities. Hence, an approximation of the prior by a set of samples is often used instead. The filtering step entails the product of the prior and a likelihood function. In some cases, e.g., Gaussian mixture densities, this product leads to an undesired exponential increase in parameters and calls for regular reapproximation.

%
%
The given density and its approximation can both be either continuous or discrete (over continuous domains).
%
%
Four cases can be distinguished:
\begin{itemize}
    \item [\bfseries 1) Continuous/continuous:] Approximation of a given continuous density by another continuous density of different type or of the same type with a different number of parameters. This is called density reapproximation.
    \item [\bfseries 2) Continuous/discrete:] Approximation of a given continuous density by a discrete density. This is called sampling. \emph{This is the case we consider in this paper.}
    \item [\bfseries 3) Discrete/continuous:] Approximation of a given discrete density, i.e, samples, by a continuous density. This is called density estimation.
    \item [\bfseries 4) Discrete/discrete:] Approximation of a given discrete density, i.e, samples, by another discrete density, i.e., another sample set. This is called Dirac mixture reapproximation. An important special case is the approximation with a smaller number of samples. This is called Dirac mixture reduction.
\end{itemize}

%
%
In this paper, we focus on the second case of approximating a given continuous density by a deterministic discrete density, see \Fig{Teaser_DMA_GM}.
%
%
The discrete density is represented by a weighted set of Dirac delta functions at specific locations.
%
%
The deterministic sampling procedure includes the calculation of appropriate locations and, if desired, also appropriate weights.
%
%
In this paper we focus on equally weighted samples or at least on samples with prescribed weights.

\section{Problem Formulation} \label{Sec_ProbForm}

	%
%
The problem we solve in this paper is the systematic approximation of a given multivariate continuous density $\tilde{f}(\vec{x})$ by a Dirac mixture density $f(\vec{x})$, i.e., by a set of $L$ deterministic samples, given by
\begin{equation}
    f(\vec{x}) = f(\vec{x} \given \hat{\mat{X}}) 
    = \sum_{i=1}^L w_i \, \delta(\vec{x} - \hat{\vec{x}}_i)
\label{Eq_Dirac}
\end{equation}
with given weights $w_i \in \NewR_{>0}$, $\textstyle\sum_{i=1}^L w_i = 1$, and desired locations $\hat{\vec{x}}_i \in \NewR^N$. $N$ is the number of dimensions. The locations are collected in an $N \times L$ matrix $\hat{\mat{X}} = \begin{bmatrix} \hat{\vec{x}}_1, \hat{\vec{x}}_2, \ldots, \hat{\vec{x}}_L \end{bmatrix}$.

%
%
As we insist on a systematic approximation, we require a rigorous distance measure between the given density $\tilde{f}(\vec{x})$ and its approximation $f(\vec{x})$ to be minimized. As this entails a comparison between a continuous and a discrete density, finding an appropriate and feasible distance  measure is a major challenge.
%
%
An overview of the current literature is given in the next section.

\if{false}
%
%
An additional aspect is the use of the dual approximation, i.e., the approximation of a Dirac mixture by a Gaussian mixture.
%
%
As a rigorous distance measure between the Dirac mixture and its approximation is minimized, this avoids singularity effects that are known to haunt maximum likelihood based approaches.
\fi

\begin{figure*}[t]
\includegraphics[width=100mm,valign=m]{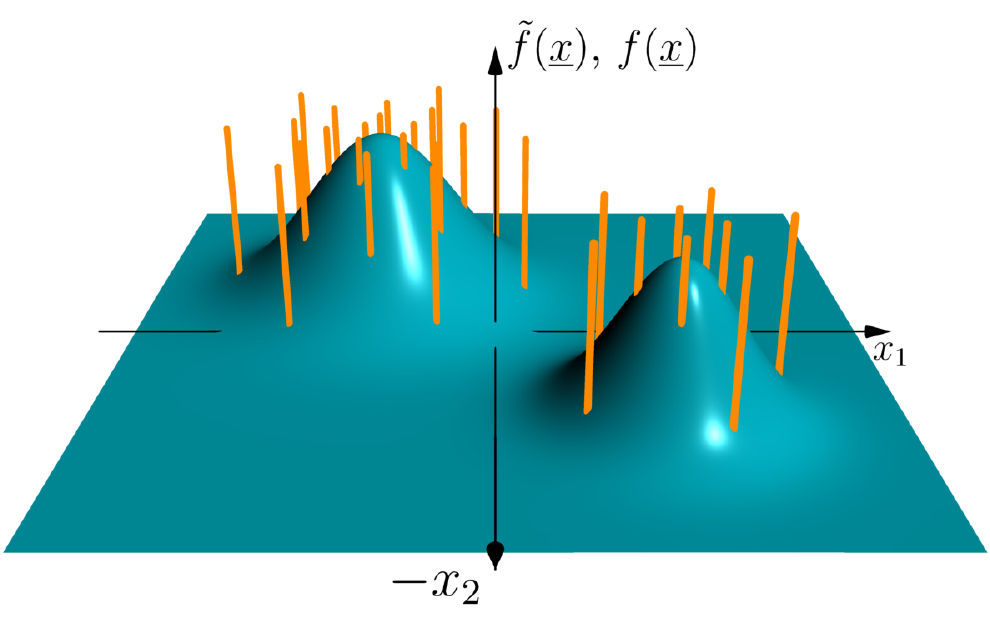}
\hspace*{\fill}
\includegraphics[width=65mm,valign=m]{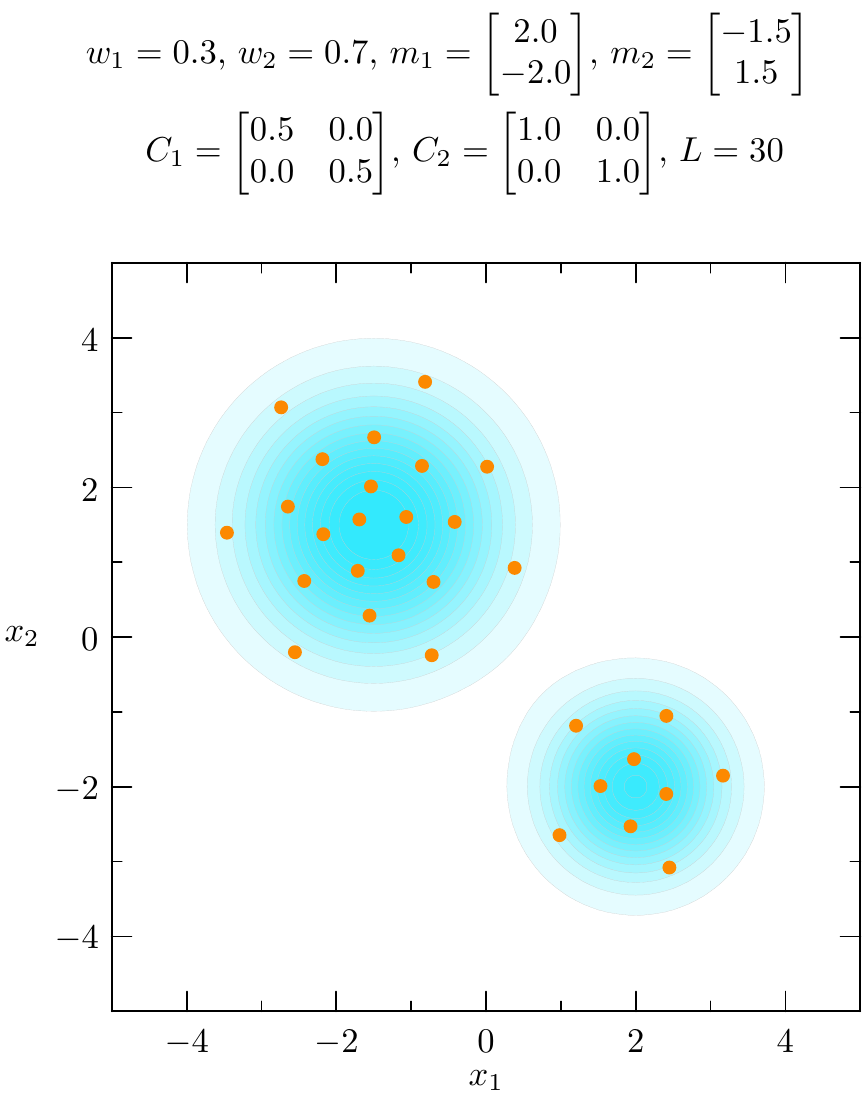}
\caption{Dirac mixture approximation of Gaussian mixture with two components. Left: Perspective view. Right: Top view and parameters.}
\label{Teaser_DMA_GM}
\end{figure*}

\section{State of the Art} \label{Sec_State_of_Art}

	%
%
We will now give an overview of the current state of the literature on approximating given continuous probability density functions by a set of samples. 
%
%
Most results are available for Gaussian densities. Although we focus on the general non-Gaussian case, we will review some literature for Gaussians.
%
%
We will focus on deterministic approximation. However, a few words on random sampling will be given first.

%
%
Generating samples for a given probability density function is usually performed by random sampling. This is the simplest and fastest method and well-established algorithms are available for that purpose. However, with random sampling, samples are produced independently. As a result, convergence of its statistics, such as the moments, to the true values is slow so that many samples are required. The independent generation of samples also leads to a non-homogeneous coverage of the given density.

%
%
\subsection{Deterministic Sampling: Moment-based Approaches for the Gaussian Case}
As an alternative to random sampling, deterministic sampling methods have been proposed. Moment-based approximations of Gaussian densities are the basis for Linear Regression Kalman Filters (LRKFs), see \cite{lefebvreLinearRegressionKalman2005}. Examples are the Unscented Kalman Filter (UKF) in \cite{julierNewMethodNonlinear2000} and its scaled version in \cite{julierScaledUnscentedTransformation2002}. A higher-order generalization is proposed in \cite{tenneHigherOrderUnscented2003}. These methods are limited to a fixed number of samples per dimension. An arbitrary number of deterministic samples of a Gaussian density is provided by the spherical-radial rule in \cite{arasaratnamCubatureKalmanFilters2009} for the so called cubature filter.

%
%
\subsection{Deterministic Sampling: Distance-based Approaches}
%
%
Instead of relying on moment relations that quickly become intractable, distance-based approaches minimize an appropriate distance measure between the given continuous density and its discrete approximation.
%
%
Distance-based approaches have several distinct advantages: (1)~Arbitrary number of samples. (2)~Homogeneous allocation of samples. (3)~Prioritization of density regions especially important for further processing.

%
%
\subsubsection*{One-dimensional Densities}
For one-dimensional (1D) densities, 
%
%
the vast literature on 1D statistical tests can be used.
%
%
These are used for testing the hypothesis that a set of samples has a certain density. Inverting these tests can be used to fit samples to a given density.
%
%
We will focus on one-sample tests that compare the cumulative distribution function (CDF) of the continuous density with the empirical distribution function (EDF) of the samples. The Kolmogorov-Smirnov test 
\cite{masseyKolmogorovSmirnovTestGoodness1951} uses the maximum between the CDF and the EDF, which is difficult to optimize. Integral squared distances between CDF and EDF were first considered by Cram\'{e}r \cite{cramerCompositionElementaryErrors1928} and von Mises leading to the Cram\'{e}r-von Mises test. Several tests have been developed on that basis, differing only in the weighting function for the squared difference. One notable test is the Anderson-Darling test \cite{andersonAsymptoticTheoryCertain1952}.

%
%

Dirac mixture approximation (with equally weighted samples) of continuous densities based on the Cram\'{e}r-von Mises criterion for the univariate case is proposed in \cite{CDC06_Schrempf-DiracMixt}. Optimizing both weights and locations is considered in \cite{MFI06_Schrempf-CramerMises}. Sequentially increasing the number of samples is the focus of \cite{CDC07_HanebeckSchrempf} and is applied to recursive nonlinear prediction in \cite{ACC07_Schrempf-DiracMixt}.

%
%
\subsubsection*{Multivariate Densities}
%
%
\emph{\bfseries Challenges:}
For $N$-dimensional (ND) densities, with $N>1$, systematic approximations are much harder to achieve as
%
%
cumulative distribution functions in higher dimensions are difficult to calculate. In addition, they are not unique \cite{MFI08_Hanebeck-LCD}. This is in contrast to the univariate case, where there are only two directions of integration for obtaining the cumulative distribution function from the density. The results of the two directions are dependent, i.e., sum up to one. In higher dimensions, the number of orderings of possible integration directions increases exponentially with $2^N$, where $N$ is the number of dimensions. $2^N-1$ of these distributions are independent 
\cite[p.~617]{peacockTwoDimensionalGoodnessofFitTesting1983}.
%
%
Selecting an arbitrary distribution from these possibilities for density approximation typically results in a bias of the estimated parameters \cite{MFI08_Hanebeck-LCD}.

%
%
\emph{\bfseries Consider All CDF Orderings:}
In order to cope with the non-uniqueness, it has been proposed in the context of multivariate statistical tests, to simply consider all possible orderings 
\cite[p.~617]{peacockTwoDimensionalGoodnessofFitTesting1983}. This can easily be done in two or three dimensions \cite{fasanoMultidimensionalVersionKolmogorovSmirnov1987}. For higher-dimensional spaces, considering all ordering becomes impractical.

%
%
\emph{\bfseries Localized Cumulative Distribution:}
Instead of using the standard cumulative distribution with its non-uniqueness issues, an alternative cumulative distribution is introduced in \cite{MFI08_Hanebeck-LCD}. It is called Localized Cumulative Distribution (LCD). A distance measure is then defined based on the LCDs of the two densities to be compared.
%
%
The LCD-based approach is used for systematically approximating arbitrary multi-dimensional Gaussian densities \cite{CDC09_HanebeckHuber}. An improved method with better numerical stability that exploits the symmetry of the Gaussian density with symmetric samples is proposed in \cite{JAIF16_Symmetric_S2KF_Steinbring}. For multivariate standard normal distributions, a more efficient scheme is presented that relies on a subsequent transformation \cite{ACC13_Gilitschenski}.
%
%
The LCD-based approach provides very good approximations. However, for the case of approximating continuous densities with Dirac mixtures, closed-form solutions for the distance measure are only available in special cases.

%
%
\emph{\bfseries Repulsion Kernels and Blue Noise:}
%
%
%
Comparing cumulative distribution functions is, of course, not the only way to compare discrete densities or continuous and discrete densities. Alternatives include comparing characteristic functions~\cite{stephensEDFStatisticsGoodness1974} and using kernel estimates.
%
%
Kernel estimates of a discrete density are exceptionally simple in the considered case of finding the best-fitting discrete density for a given continuous density. In that case, we use the fact that the density at a certain sample location is equal to the corresponding value of the continuous density~\cite{fattalBlueNoisePointSampling2011}. So called repulsion kernels are employed at the sample locations. The induced kernel density is then compared to the given continuous density by means of, e.g., an integral squared distance~\cite{CISS14_Hanebeck}. For given Gaussian densities that are approximated by Dirac mixtures, a closed-form expression for the distance measure is derived in~\cite{Fusion14_Hanebeck}. A randomized optimization method is then used for finding the optimal locations instead of a quasi-Newton method in~\cite{CISS14_Hanebeck}.

%
%
\emph{\bfseries Reduction to Univariate Case:}
We have seen that the univariate case is significantly simpler than the multivariate one. So it comes as no surprise that many attempts have been made to reduce the multivariate case to the univariate one:
\begin{itemize} \itemsep0pt
    \item \emph{Approximation on Principal Axes:} The first idea that immediately springs to mind is to use 1D approximations on the principal axes of the given continuous density. This is, of course, limited to densities where principal axes can naturally be defined. This includes Gaussian densities \cite{IFAC08_Huber} in $\NewR^N$ and the Bingham distribution \cite{ECC19_Li} defined on $\mathbb{S}^{N-1} \subset \NewR^N$.
    \item \emph{Cartesian Products:} For random vectors with independent components, e.g., axis-aligned Gaussian densities, the $N$ marginal densities can be approximated independently. The $N$-dimensional Dirac mixture approximation can then be obtained by a Cartesian product of the individual marginal approximations. An obvious disadvantage is that the total number of samples is the product of the individual numbers of samples, i.e., scales exponentially with the number of dimensions $N$.
    \item \emph{One-dimensional Projections:} General densities cannot be represented by their marginals, only densities that can be factorized. However, instead of limiting projections to the coordinate axes as done for the marginals, more projections onto different axes can be considered. For representing arbitrary densities, all possible one-dimensional projections have to be considered. This is called the Radon transform of the given density, see the original publication in German \cite{radonUberBestimmungFunktionen1917}, its English translation
    \cite{radonDeterminationFunctionsTheir1986}, and
    \cite{deansRadonTransformHigher1978} for the higher-dimensional case. In \cite{bonneelSlicedRadonWasserstein2015}, a multivariate Wasserstein distance is constructed from the Wasserstein distance between one-dimensional projections (called slices therein).
\end{itemize}

%
%
\subsection{Combined Random and Deterministic Sampling}
Besides pure random or pure deterministic sampling, combined random and deterministic sampling has been proposed. The idea is to perform a random rearrangement, e.g., by rotation, of a deterministic set of sample points, see \cite{strakaRandomizedUnscentedKalman2012}.

\section{Key Idea and Main Results} \label{Sec_KeyIdea}

	%
%
As we have seen in the state of the art in the previous section: Calculating the distance between a continuous and a discrete density is a difficult challenge in the multivariate case.
%
%
However, calculating the distance between univariate densities is feasible.

\paragraph*{Na\"ive Approach}
%
%
We now assume an integral distance measure of the form
\begin{equation*}
    D_N = D_N\left( \tilde{f}, f \right) 
    = \int_{\NewR^N} d_N\left( \tilde{f}(\vec{x}), f(\vec{x}) \right)  \, \d \vec{x} \enspace ,
\end{equation*}
where $d_N(.,.)$ is a distance measure between two densities that is integrated over the entire space.
%
%
For example, $d_N(.,.)$ could be defined in terms of the cumulative distribution functions of $\tilde{f}(\vec{x})$ and $f(\vec{x})$ given by $\tilde{F}(\vec{x})$ and $F(\vec{x})$ as
\begin{equation*}
    d_N\left( \tilde{f}(\vec{x}), f(\vec{x}) \right) 
    = \left[ \tilde{F}(\vec{x}) - F(\vec{x}) \right]^2 \enspace .
\end{equation*}
The cumulative distribution function $F(\vec{x})$ is given by
\begin{equation*}
    F(\vec{x}) = \int_{-\vec{\infty}}^{\vec{x}} f(\vec{t}) \, \d \vec{t} \enspace .
\end{equation*}
An analogous result holds for $\tilde{F}(\vec{x})$, so the distance measure is
\begin{equation*}
    D_N\left( \tilde{f}, f \right) 
    = \int_{\NewR^N} \left[ \tilde{F}(\vec{x}) - F(\vec{x}) \right]^2 \, \d \vec{x} \enspace .
\end{equation*}
However, as we have seen in the last section, multivariate cumulative distributions are not unique and difficult to calculate.

%
%
One option to solve this problem is to use the Localized Cumulative Distribution (LCD) instead of the standard cumulative distribution. The LCD is unique, easy to calculate, and gives high-quality results. However, the complexity depends on the number of dimensions considered.

\paragraph*{Proposed Approach}
%
%
Here, we pursue a different idea.
\begin{enumerate}
\item We represent the two densities $\tilde{f}(\vec{x})$ and $f(\vec{x}) = f(\vec{x} \given \hat{\mat{X}})$ from \Eq{Eq_Dirac} by the infinite set of one-dimensional projections onto all unit vectors $\vec{u} \in \mathbb{S}^{N-1}$. For $\tilde{f}(\vec{x})$, we obtain the set of projections $\tilde{f}(r \given \vec{u})$ with $r = \vec{u}^\top \vec{x}$, which is the Radon transform of $\tilde{f}(\vec{x})$. In the same way, we obtain the Radon transform $f(r \given \hat{\vec{r}}, \vec{u})$ of the Dirac mixture approximation in \Eq{Eq_Dirac}, where $\hat{r}_i = \vec{u}^\top \hat{\vec{x}}_i$ are the transformed Dirac locations for $i \in \{1, 2, \ldots, L\}$.
\item We compare the one-dimensional projections for every $\vec{u} \in \mathbb{S}^{N-1}$. For the comparison, we can use the univariate cumulative distribution functions $\tilde{F}(r \given \vec{u})$ and $F(r \given \hat{\vec{r}}, \vec{u})$ as these are unique, well defined, and easy to calculate. The resulting distance measures depend on the projection vector $\vec{u}$ and are given by
\begin{equation*}
    D_1(\vec{u}) 
    = \int_\NewR d_1\left( \tilde{f}(r \given \vec{u}), 
    f(r \given \hat{\vec{r}}, \vec{u}) \right) \, \d r \enspace ,
\end{equation*}
with
\begin{equation*}
    d_1\left( \tilde{f}(r \given \vec{u}), f(r \given \hat{\vec{r}}, \vec{u}) \right) 
    = \left[ \tilde{F}(r \given \vec{u}) - F(r \given \hat{\vec{r}}, \vec{u}) \right]^2 \enspace .
\end{equation*}
\item We can easily minimize this one-dimensional distance measure, e.g., by Newton's method, which yields changes $\DeltaX{r}(\vec{u})$ of the given projected Dirac locations $\hat{\vec{r}}(\vec{u})$ for every projection $\vec{u} \in \mathbb{S}^{N-1}$.
\item These changes can be projected back to the Dirac locations as $\DeltaX{x,i}(\vec{u})$
for $i \in \{1, 2, \ldots, L\}$ for every projection $\vec{u} \in \mathbb{S}^{N-1}$.
\item We integrate the one-dimensional distance measures over all possible unit vectors $\vec{u}$, which gives a multivariate distance measure
\begin{equation*}
    D_N = D_N\left( \tilde{f}, f \right)
    = \frac{1}{A_N} \int_{\mathbb{S}^{N-1}} D_1(\vec{u})  \, \d \vec{u}
    = \frac{1}{A_N} \int_{\mathbb{S}^{N-1}} 
    \int_\NewR \left[ \tilde{F}(r \given \vec{u}) - F(r \given \hat{\vec{r}}, \vec{u}) \right]^2 \, \d r \, \d \vec{u}
    \enspace ,
\end{equation*}
where $A_N$ is the surface area of the (hyper)sphere $\NewS^{N-1}$ embedded in $\NewR^N$.
\item The changes from step 4 can be averaged over all $\vec{u} \in \mathbb{S}^{N-1}$
\begin{equation*}
    \DeltaX{x,i} = \frac{1}{A_N} \int_{\mathbb{S}^{N-1}} 
    \DeltaX{x,i}(\vec{u}) \, \d \vec{u} \enspace ,
\end{equation*}
which corresponds to a combined Newton step for all $\vec{u}$ for $\hat{\vec{x}}_i$, $i \in \{1, 2, \ldots, L\}$.
\end{enumerate}

\section{Density Representation via One-dimensional Projections (Radon Transform)} \label{Sec_Radon}

	%
%
In this section, we will represent general $N$-dimensional probability density functions via the set of all one-dimensional projections. This is also known as the Radon Transform \cite{radonUberBestimmungFunktionen1917}\footnote{English translation available \cite{radonDeterminationFunctionsTheir1986}.}.

\begin{figure*}[t]
    \begin{center}
    \includegraphics{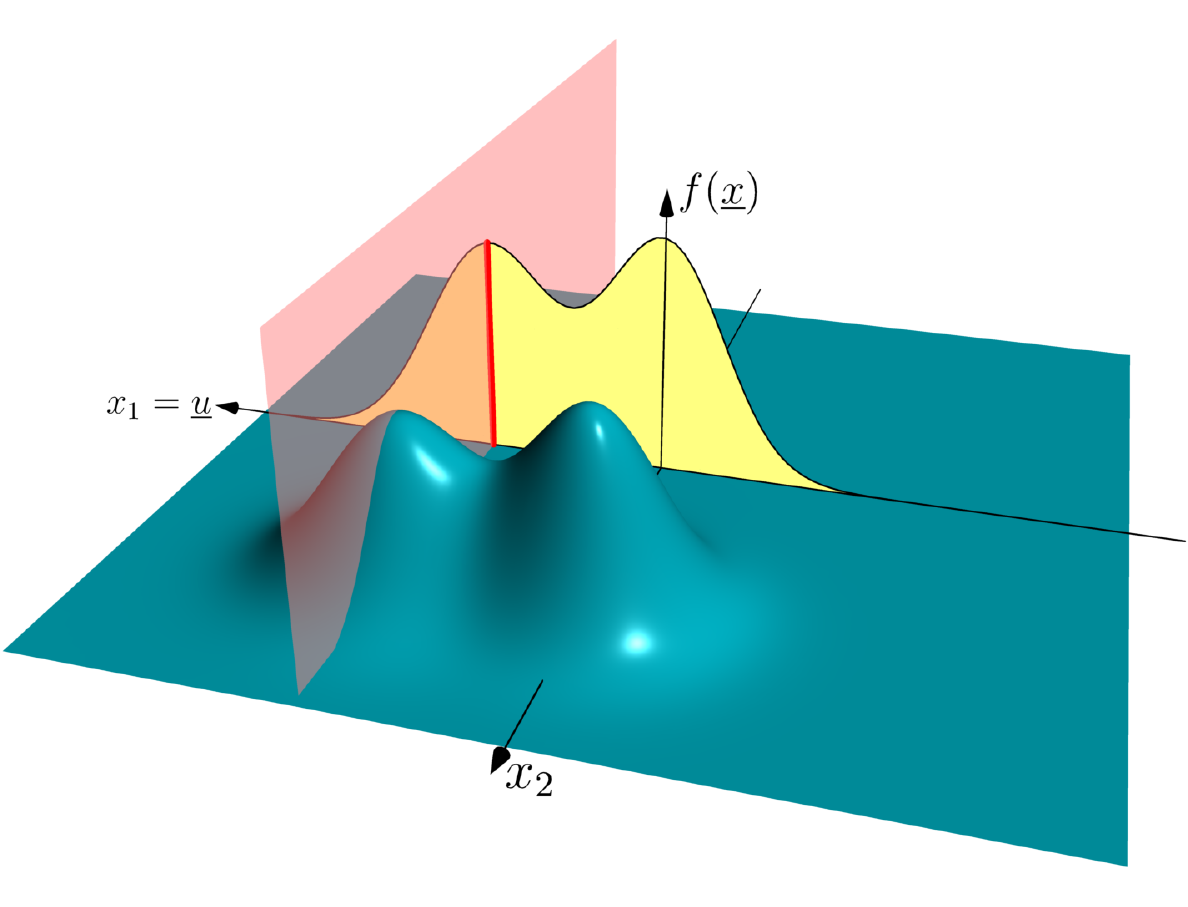}
    \end{center}
    \caption{Visualization of Radon transform. Blue: Original multivariate density $f(\vec{x})$. Red: Single line integral over $f(\vec{x})$. Yellow: Projected density $f_r(r \given \vec{u})$ with $\vec{u} = [1, 0]^\top$.}
    \label{Radon_Visualize_2}
\end{figure*}

%
%
We consider the linear projection of a random vector $\rv{\vec{x}} \in \NewR^N$ to a scalar random variable $\rv{r} \in \NewR$
\begin{equation}
    \rv{r} = \vec{u}^\top \rv{\vec{x}}
\label{Eq_ProjectRV}
\end{equation}
on the line described by the unit vector $\vec{u} \in \mathbb{S}^{N-1}$.

%
%
Given the probability density function $f(\vec{x})$ of the random vector $\rv{\vec{x}}$, the density $f_r(r \given \vec{u})$ of $\rv{r}$ is given by
\begin{equation}
    f_r(r \given \vec{u}) = \int_{\NewR^N} f(\vec{t}) \, \delta(r - \vec{u}^\top 
    \vec{t}) \, \d \vec{t} \enspace .
\label{Eq_Radon}
\end{equation}
$f_r(r \given \vec{u})$ is the Radon transform of $f(\vec{x})$ for all $\vec{u} \in \mathbb{S}^{N-1}$. 
\begin{example}[2D Radon transform] 
    We consider a two-dimensional Gaussian mixture with two components. A single projection of this density is visualized for $\vec{u} = [1, 0]^\top$ in \Fig{Radon_Visualize_2} and for $\vec{u} = [-1, 1]^\top/\sqrt{2}$ in \Fig{Radon_Visualize_1}.
\end{example}

\begin{figure*}[t]
    \begin{center}
    \includegraphics{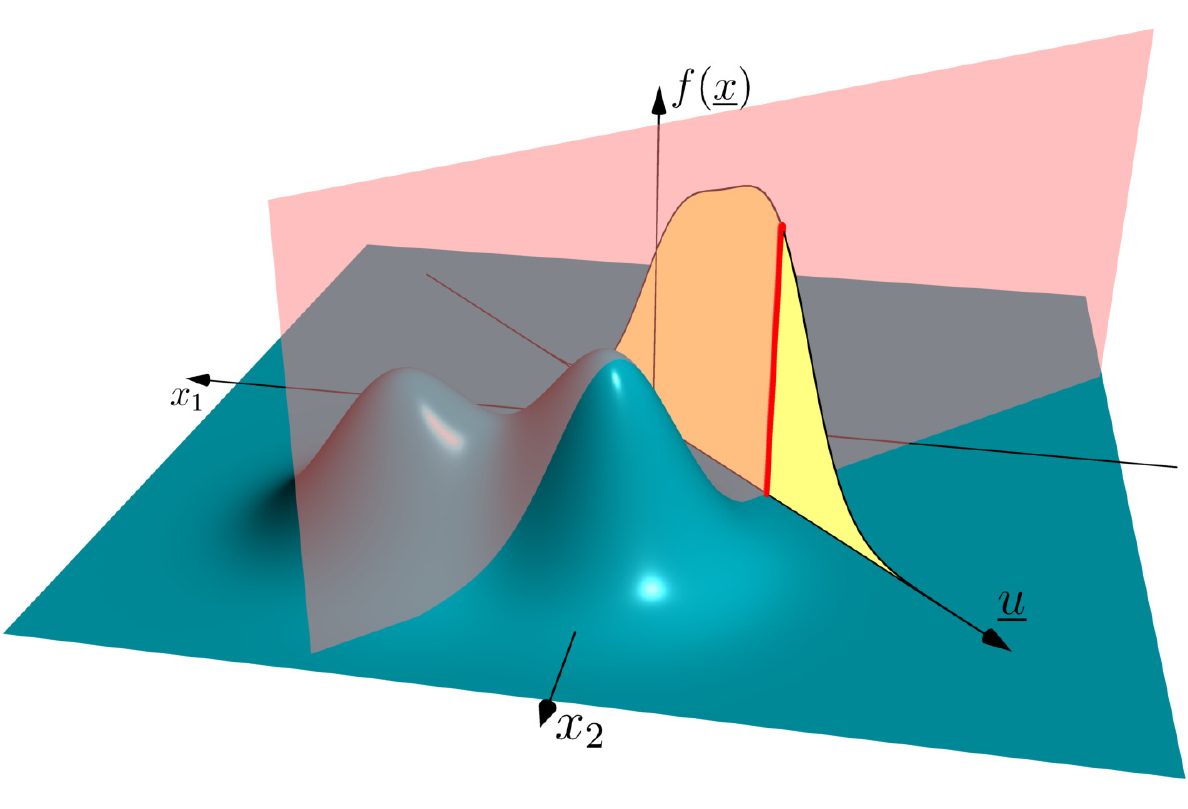}
    \end{center}
    \caption{Visualization of Radon transform. Blue: Original multivariate density $f(\vec{x})$. Red: Single line integral over $f(\vec{x})$. Yellow: Projected density $f_r(r \given \vec{u})$ with $\vec{u} = [-1, 1]^\top/\sqrt{2}$.}
    \label{Radon_Visualize_1}
\end{figure*}

%
%
\paragraph*{Dirac Mixture Densities}
We consider the Dirac mixture approximation $f(\vec{x} \given \hat{\mat{X}})$ in \Eq{Eq_Dirac}.
\begin{theorem}
The Radon transform of $f(\vec{x} \given \hat{\mat{X}})$ is given by
\begin{equation*}
    f_r(r \given \hat{\vec{r}}, \vec{u}) 
    = \sum_{i=1}^L w_i \, \delta\big( r - \hat{r}_i(\vec{u}) \big) \enspace ,
\end{equation*}
where $\hat{r}_i(\vec{u}) = \vec{u}^\top \hat{\vec{x}}_i$, $i \in \{1, 2, \ldots, L\}$, are the projected Dirac locations. 
\end{theorem}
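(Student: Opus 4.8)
The plan is a direct substitution followed by a sifting-property argument, so this is essentially a one-line computation dressed up with a justification of the distributional manipulations. First I would insert the Dirac mixture $f(\vec{x} \given \hat{\mat{X}}) = \sum_{i=1}^L w_i\,\delta(\vec{x} - \hat{\vec{x}}_i)$ from \Eq{Eq_Dirac} into the definition of the Radon transform in \Eq{Eq_Radon}, replacing $f(\vec{t})$ by $\sum_{i=1}^L w_i\,\delta(\vec{t} - \hat{\vec{x}}_i)$. This yields $f_r(r \given \hat{\vec{r}}, \vec{u}) = \int_{\NewR^N} \big(\sum_{i=1}^L w_i\,\delta(\vec{t} - \hat{\vec{x}}_i)\big)\,\delta(r - \vec{u}^\top \vec{t})\,\d\vec{t}$.

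Next I would use linearity to pull the finite sum and the constants $w_i$ outside the integral, leaving $f_r(r \given \hat{\vec{r}}, \vec{u}) = \sum_{i=1}^L w_i \int_{\NewR^N} \delta(\vec{t} - \hat{\vec{x}}_i)\,\delta(r - \vec{u}^\top \vec{t})\,\d\vec{t}$. Then for each fixed $i$ I would apply the sifting property of the Dirac delta in $\NewR^N$ to the test "function" $\vec{t} \mapsto \delta(r - \vec{u}^\top \vec{t})$ (here treated as a distribution in the variable $r$ with $\vec{t}$ as parameter), which collapses the integral and replaces $\vec{t}$ by $\hat{\vec{x}}_i$ everywhere. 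This produces $\delta(r - \vec{u}^\top \hat{\vec{x}}_i) = \delta(r - \hat{r}_i(\vec{u}))$ with $\hat{r}_i(\vec{u}) = \vec{u}^\top \hat{\vec{x}}_i$, and summing the weighted terms gives exactly the claimed expression.

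The only point requiring a little care — and the closest thing to an "obstacle" — is that we are formally evaluating an integral of a product of two Dirac deltas, which is not meaningful for arbitrary distributions. Here it is harmless because one of the factors, $\delta(\vec{t} - \hat{\vec{x}}_i)$, is a point mass at a fixed location, so the "multiplication" reduces to ordinary evaluation of the other factor at that point, and the result $\delta(r - \hat{r}_i(\vec{u}))$ is a perfectly well-defined distribution in $r$. I would either state this justification in one sentence, or alternatively present the argument by pairing $f_r(r \given \hat{\vec{r}}, \vec{u})$ against an arbitrary smooth test function $\varphi(r)$: interchanging the (finite, absolutely convergent) order of integration and applying the sifting property twice shows $\langle f_r(\cdot \given \hat{\vec{r}}, \vec{u}), \varphi\rangle = \sum_{i=1}^L w_i\,\varphi(\vec{u}^\top \hat{\vec{x}}_i)$, which characterizes the stated Dirac mixture uniquely. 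Everything else is routine.
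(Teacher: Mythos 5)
Your proposal is correct and follows essentially the same route as the paper: substitute the Dirac mixture from \Eq{Eq_Dirac} into \Eq{Eq_Radon}, exchange sum and integral, and apply the sifting property to obtain $\delta(r - \vec{u}^\top \hat{\vec{x}}_i)$. The extra remark on justifying the product of deltas (or the test-function pairing) is a harmless refinement beyond what the paper states.
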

\begin{proof}
    Plugging $f(\vec{x} \given \hat{\mat{X}})$ in \Eq{Eq_Dirac} into \Eq{Eq_Radon} gives
    \begin{equation*}
        f_r(r \given \hat{\vec{r}}, \vec{u}) 
        = \int_{\NewR^N} f(\vec{t} \given \hat{\mat{X}}) 
        \, \delta(r - \vec{u}^\top \vec{t}) \, \d \vec{t}
        = \sum_{i=1}^L w_i \int_{\NewR^N} \delta(\vec{t} - \hat{\vec{x}}_i)
        \, \delta(r - \vec{u}^\top \vec{t}) \, \d \vec{t} \enspace .
    \end{equation*}
    With the sifting property of the Dirac delta function $\delta(.)$, we directly obtain
    \begin{equation*}
        f_r(r \given \hat{\vec{r}}, \vec{u}) 
        = \sum_{i=1}^L w_i \, \delta(r - \vec{u}^\top \hat{\vec{x}}_i) \enspace ,
    \end{equation*}
    which is the desired result.
\end{proof}
We collect the locations in a vector 
\begin{equation*}
    \hat{\vec{r}}(\vec{u}) = \begin{bmatrix} \hat{r}_1(\vec{u}), \hat{r}_2(\vec{u}), 
    \ldots, \hat{r}_L(\vec{u}) \end{bmatrix}^\top \enspace .
\end{equation*}
The vector $\hat{\vec{r}}(\vec{u})$ is calculated as
\begin{equation*}
    \hat{\vec{r}}(\vec{u}) 
    = \big[ 
        \underbrace{\vec{u}^\top \hat{\vec{x}}_1}_{\hat{r}_1(\vec{u})}, \,
        \underbrace{\vec{u}^\top \hat{\vec{x}}_2}_{\hat{r}_2(\vec{u})}, \,
        \ldots, \,
        \underbrace{\vec{u}^\top \hat{\vec{x}}_L}_{\hat{r}_L(\vec{u})} 
    \big]^\top
    = \big( \vec{u}^\top 
    \underbrace{\big[ \hat{\vec{x}}_1, \, \hat{\vec{x}}_2, \, \ldots, \,
    \hat{\vec{x}}_L \big]}_{\hat{\mat{X}}} \big)^\top
    = \big( \vec{u}^\top \hat{\mat{X}} \big)^\top
    = \hat{\mat{X}}^\top \vec{u} \enspace .
\end{equation*}
%

%
%
\paragraph*{Gaussian Densities}
For $N$-dimensional Gaussian densities $f(\vec{x})$ given by
\begin{equation*}
    f(\vec{x}) = \frac{1}{\sqrt{(2 \pi)^N | \mat{C}_x | }}
    \exp\left( - \frac{1}{2} (\vec{x} - \hat{\vec{x}})^\top 
    \mat{C}_x^{-1} (\vec{x} - \hat{\vec{x}}) \right)
\end{equation*}
with mean vector $\hat{\vec{x}}$ and covariance matrix $\mat{C}_x$, the density $f_r(r \given \vec{u})$ with $f_r \colon \NewR \rightarrow \NewR_{\ge 0}$ resulting from the projection is also Gaussian according to
\begin{equation*}
    f_r(r \given \vec{u}) = \frac{1}{\sqrt{2 \pi} \sigma_r(\vec{u})}
    \exp\left( - \frac{1}{2} \frac{\big( r - \hat{r}(\vec{u}) \big)^2}{\sigma_r^2(\vec{u})} \right)
    \enspace .
\end{equation*}
Its mean $\hat{r}(\vec{u}) \in \NewR$ can simply be calculated by taking the expected value in \Eq{Eq_ProjectRV}, which gives
\begin{equation*}
    \hat{r}(\vec{u}) = \E\{ \rv{r}(\vec{u}) \} = \E\left\{ \vec{u}^\top \rv{\vec{x}} \right\} 
    = \vec{u}^\top \hat{\vec{x}} \enspace.
\end{equation*}
Its standard deviation $\sigma_r(\vec{u}) \in \NewR_{> 0}$  is given by
\begin{equation*}
    \sigma_r^2(\vec{u}) 
    = \E\left\{ \big( \rv{r}(\vec{u}) - \hat{r}(\vec{u}) \big)^2 \right\} 
    = \E\left\{ \left( \vec{u}^\top \rv{x} - \vec{u}^\top \hat{\vec{x}} \right)^2 \right\} 
    = \E\left\{ \vec{u}^\top ( \rv{x} - \hat{\vec{x}} ) 
        ( \rv{x} - \hat{\vec{x}} )^\top \vec{u} \right\} 
    = \vec{u}^\top \mat{C}_x \, \vec{u} \enspace .
\end{equation*}
%
%
\paragraph*{Gaussian Mixture Densities}
For $N$-dimensional Gaussian mixture densities $f(\vec{x})$ with $M$ components of the form 
\begin{equation*}
    f(\vec{x}) = \sum_{i=1}^M w_i \frac{1}{\sqrt{(2 \pi)^N | \mat{C}_{x,i} | }}
    \exp\left( - \frac{1}{2} (\vec{x} - \hat{\vec{x}}_i)^\top 
    \mat{C}_{x,i}^{-1} (\vec{x} - \hat{\vec{x}}_i)\right) \enspace ,
\end{equation*}
the density $f_r(r \given \vec{u})$ is also a Gaussian mixture, albeit a one-dimensional one. Due to the linearity of the projection operator, it is given by
\begin{equation*}
    f_r(r \given \vec{u}) = \sum_{i=1}^M w_i \frac{1}{\sqrt{2 \pi} \sigma_{r,i}(\vec{u})}
    \exp\left( - \frac{1}{2} \frac{\big( r 
    - \hat{r}_i(\vec{u}) \big)^2}{\sigma_{r,i}^2(\vec{u})} \right)
\end{equation*}
with
\begin{equation*}
    \hat{r}_i(\vec{u}) = \vec{u}^\top \hat{\vec{x}}_i
\end{equation*}
and
\begin{equation*}
    \sigma_{r,i}(\vec{u}) = \sqrt{ \vec{u}^\top \mat{C}_{x,i} \, \vec{u} }
\end{equation*}
for $i \in \{1, 2, \ldots, L\}$.

\section{One-dimensional Dirac Mixture Approximation} \label{Sec_1D}
	
	%
%
In this section, we consider the Dirac mixture approximation on the one-dimensional projections, i.e., we focus on the 1D case for one single projection vector $\vec{u}$. The one-dimensional approximation then serves as the basis for the $N$-dimensional approximation in the next section.

\subsection{Distance Measure} \label{SubSec_DM_1D}
%
%
The approximating density is given by the 1D Dirac mixture
\begin{equation}
    f(x \given \hat{\vec{x}}) = \sum_{i=1}^L w_i \, \delta(x - \hat{x}_i)
\label{Eq_Dirac1D}
\end{equation}
with given weights $w_i \in \NewR_{> 0}$ that sum up to one and locations $\hat{x}_i \in \NewR$. The desired locations are collected in a vector 
\begin{equation*}
    \hat{\vec{x}} = \begin{bmatrix} \hat{x}_1, \hat{x}_2, \ldots, \hat{x}_L 
    \end{bmatrix}^\top \enspace .
\end{equation*}
%
%
For rating the goodness of fit of $f(x \given \hat{\vec{x}})$ to the given continuous density $\tilde{f}(x)$ we compare their cumulative distribution functions rather than their densities.
%
%
For the Dirac mixture approximation $f(x \given \hat{\vec{x}})$ in \Eq{Eq_Dirac1D}, the cumulative distribution function is given by
\begin{equation*}
    F(x \given \hat{\vec{x}}) = \sum_{i=1}^L w_i \, \HeaviSide(x - \hat{x}_i) \enspace ,
\end{equation*}
where $\HeaviSide(.)$ is the Heaviside function given by
\begin{equation*}
    \HeaviSide(x) = \int_{-\infty}^x \delta(t) \, \d t 
    = \begin{cases}
        0 & x < 0 \\[2mm]
        \frac{1}{2} & x = 0 \\[2mm]
        1 & x > 0
    \end{cases}
\end{equation*}
with
\begin{equation*}
    \frac{\d}{\d x} \HeaviSide(x) = \delta(t) \enspace .
\end{equation*}
%
%
As a distance measure we use the integral squared distance between the cumulative distribution function $\tilde{F}(x)$ of the given distribution and the cumulative distribution function $F(x \given \hat{\vec{x}})$ of its Dirac mixture approximation
\begin{equation*}
    D(\hat{\vec{x}}) = \int_\NewR \left[ \tilde{F}(t) 
    - F(t \given \hat{\vec{x}}) \right]^2 \d t 
    \enspace .
\end{equation*}
The integral squared distance $D(\hat{\vec{x}})$ depends on the vector of desired locations $\hat{\vec{x}}$.

\subsection{Dirac Mixture Approximation by Minimizing the Distance Measure} \label{SubSec_Newton_1D}
%
%
The gradient of the distance measure $D(\hat{\vec{x}})$ is given by
\begin{equation*}
    \vec{G}(\hat{\vec{x}}) = \nabla D(\hat{\vec{x}}) 
    = \frac{\partial D(\hat{\vec{x}})}{\partial \hat{\vec{x}}} 
    = \begin{bmatrix} 
        \frac{\partial D(\hat{\vec{x}})}{\partial \hat{x}_1}, &
        \frac{\partial D(\hat{\vec{x}})}{\partial \hat{x}_2}, &
        \ldots, &
        \frac{\partial D(\hat{\vec{x}})}{\partial \hat{x}_L}
    \end{bmatrix}^\top
\end{equation*}
with
\begin{equation*}
    \frac{\partial D(\hat{\vec{x}})}{\partial \hat{x}_i}
    = 2 w_i \int_\NewR \left[ \tilde{F}(t) - F(t \given \hat{\vec{x}}) \right] 
    \delta(t - \hat{x}_i) \, \d t
    = 2 w_i \left[ \tilde{F}(\hat{x}_i) - F(\hat{x}_i \given \hat{\vec{x}}) \right]
    \enspace .
\end{equation*}
%
%
The empirical cumulative distribution function $F(. \given .)$ evaluated at location $\hat{x}_i$ is given by
\begin{equation}
    F(\hat{x}_i \given \hat{\vec{x}}) = \sum_{j=1}^L w_j \, \HeaviSide(\hat{x}_i - \hat{x}_j) \enspace .
\label{Eq_eCDF_i}
\end{equation}
\begin{remark}
    It is interesting to note that $F(\hat{x}_i \given \hat{\vec{x}})$ does not depend upon $\hat{x}_i$. As $\HeaviSide(\hat{x}_i - \hat{x}_j) = 1/2$ for $i=j$, we have
    \begin{equation*}
    F(\hat{x}_i \given \hat{\vec{x}}) = \frac{w_i}{2} + \sum_{j=1 \atop j \neq i}^L w_j \, \HeaviSide(\hat{x}_i - \hat{x}_j) \enspace .
    \end{equation*}
    For $i \neq j$, we have $\hat{x}_i \neq \hat{x}_j$ and $\HeaviSide(\hat{x}_i - \hat{x}_j)$ is either $0$ or $1$ and does not depend upon $\hat{x}_i$.
\end{remark}
%
%
The Hesse matrix is defined as
\def\Dx{D(\hat{\vec{x}})}
\begin{equation*}
    \mat{H}(\hat{\vec{x}})
    = \nabla \Dx \, \nabla^\top 
    = \frac{\partial^2 \Dx}{\partial \hat{\vec{x}} \, \partial \hat{\vec{x}}^\top}
    = 
    \begin{bmatrix} 
        \frac{\partial^2 \Dx}{\partial \hat{x}_1^2} &
        \frac{\partial^2 \Dx}{\partial \hat{x}_1 \, \partial \hat{x}_2} &
        \ldots &
        \frac{\partial^2 \Dx}{\partial \hat{x}_1 \, \partial \hat{x}_L} \\[4mm]
        \frac{\partial^2 \Dx}{\partial \hat{x}_2 \, \partial \hat{x}_1} &
        \frac{\partial^2 \Dx}{\partial \hat{x}_2^2} &
        \ldots &
        \frac{\partial^2 \Dx}{\partial \hat{x}_2 \, \partial \hat{x}_L} \\[4mm]
        \vdots & \vdots & \vdots \\[4mm]
        \frac{\partial^2 \Dx}{\partial \hat{x}_L \, \partial \hat{x}_1} &
        \frac{\partial^2 \Dx}{\partial \hat{x}_L \, \partial \hat{x}_2} &
        \ldots &
        \frac{\partial^2 \Dx}{\partial \hat{x}_L^2}
    \end{bmatrix} \enspace .
\end{equation*}
With
\begin{equation*}
    \frac{\partial^2 \Dx}{\partial \hat{x}_i \, \partial \hat{x}_j} = 0
\end{equation*}
for $i \neq j$, we have
\begin{equation*}
    \mat{H}(\hat{\vec{x}}) = \diag\left( 
        \begin{bmatrix} 
            \frac{\partial^2 \Dx}{\partial \hat{x}_1^2},
            \frac{\partial^2 \Dx}{\partial \hat{x}_2^2},
            \ldots ,
            \frac{\partial^2 \Dx}{\partial \hat{x}_L^2}
        \end{bmatrix} 
        \right) \enspace ,
\end{equation*}
where
\begin{equation*}
    \frac{\partial^2 D(\hat{\vec{x}})}{\partial \hat{x}_i^2} 
    = 2 \, w_i \, \tilde{f}(\hat{x}_i)
\end{equation*}
for $i \in \{1, 2, \ldots, L\}$.
%
%
The minimum of $D(\hat{\vec{x}})$ is obtained iteratively using Newton's method
\begin{equation*}
    \DeltaX{x}(\hat{\vec{x}}) = - \mat{H}^{-1}(\hat{\vec{x}}) \, \vec{G}(\hat{\vec{x}}) \enspace ,
\end{equation*}
starting with an initial location vector.

%
%
\paragraph*{Special Case -- Sorted Locations}
When the location vector $\hat{\vec{x}}$ is sorted, i.e., $\hat{x}_1 < \hat{x}_2 < \ldots < \hat{x}_L$, the expression $H(\hat{x}_i - \hat{x}_j)$ in \Eq{Eq_eCDF_i} gives
\begin{equation*}
    H(\hat{x}_i - \hat{x}_j)
    = \begin{cases}
        0 & i < j \\[2mm]
        \frac{1}{2} & i = j \\[2mm]
        1 & i > j
    \end{cases} \enspace .
\end{equation*}
As a result, $F(\hat{x}_i \given \hat{\vec{x}})$ can be written as
\begin{equation*}
    F(\hat{x}_i \given \hat{\vec{x}}) = \frac{w_i}{2} + \sum_{j=1}^{i-1} w_j \enspace .
\end{equation*}
%
%
%
\paragraph*{Special Case -- Sorted Locations \& Equal Weights}
For equally weighted samples, i.e., $w_i = 1/L$, the expression can be simplified to
\begin{equation*}
    F(\hat{x}_i \given \hat{\vec{x}}) = \frac{1}{2 L} + (i-1) \frac{1}{L} = \frac{2 i -1}{2 L} \enspace .
\end{equation*}

\newcommand{\QuadFigure}[3]{%
\begin{figure*}[t]
\includegraphics[width=82.5mm]{Figures/#11.pdf}
\includegraphics[width=82.5mm]{Figures/#12.pdf} \\[-2mm]
\rule{\textwidth}{1pt}\\[2mm]
\includegraphics[width=82.5mm]{Figures/#13.pdf}
\includegraphics[width=82.5mm]{Figures/#14.pdf}
\caption{#2}
\label{#3}
\end{figure*}
}%

\QuadFigure{DMA_SND_}{Dirac mixture approximation with equally weighted samples of univariate standard normal distribution with different numbers of samples $L$.}{Fig_DMA_SND}

\begin{example}[Univariate standard normal distribution]
%
%
We first consider the approximation of a one-dimensional standard normal distribution $\tilde{f}(x)$ given by
\begin{equation*}
    \tilde{f}(x) = \frac{1}{\sqrt{2 \pi}} \exp\left( - \frac{1}{2} x^2 \right) \enspace .
\end{equation*}
%
%
The cumulative distribution function of the standard normal distribution $\tilde{f}(x)$ is given by
\begin{equation*}
    \tilde{F}(x) = \frac{1}{2} \left( 1 + \erf\left( \frac{x}{\sqrt{2}} \right) \right) \enspace ,
\end{equation*}
with $\erf(.)$ the error function\footnote{The error function $\erf(.)$ is defined as $\erf(x) = \displaystyle\frac{2}{\sqrt{\pi}} \int_0^x \exp(-t^2) \, \d t$.}.
%
%
In this case, a closed-form solution of $\vec{G}(\hat{\vec{x}}) = \vec{0}$ is available, which gives locations
\begin{equation*}
    \hat{x}_i = \sqrt{2} \, \erfinv\left( \frac{2 i - 1 - L}{L} \right)
    \text{ for } i = 1, \ldots, L \enspace .
\end{equation*}
The function $\erfinv(.)$ is the inverse of the error function. Dirac mixture approximations of the standard normal distribution with different numbers of components $L$ are given in \Fig{Fig_DMA_SND}.
\end{example}

\QuadFigure{DMA_1D_GM_}{Dirac mixture approximation with equally weighted samples of some Gaussian mixtures with various components and parameters.}{Fig_DMA_1D_GM}

\begin{example}[Univariate Gaussian mixtures]
For Gaussian mixtures, the distance measure is minimized as described above. Dirac mixture approximations with $L=20$ of various Gaussian mixtures with three components with different parameters are given in \Fig{Fig_DMA_1D_GM}.
\end{example}

\section{Multivariate Dirac Mixture Approximation} \label{Sec_ND}
	
	%
%
\noindent
We are now given the three main ingredients:
\begin{enumerate}
\item We can represent a multivariate probability density function by its one-dimensional projections, see \Sec{Sec_Radon}. 
\item We have a mechanism for calculating a distance measure between a one-dimensional continuous density and its Dirac mixture approximation, see \SubSec{SubSec_DM_1D}. 
\item We can minimize this distance measure with a Newton-like method, see \SubSec{SubSec_Newton_1D}.
\end{enumerate}

%
%
Based on these three ingredients, we will now assemble a multivariate distance measure between two continuous and/or discrete probability density functions. This entails the following seven steps.

%
%
\paragraph*{Step 1 [One-dimensional Projections via Radon Transform]}
The given density $\tilde{f}(\vec{x})$ and its Dirac mixture approximation $f(\vec{x} \given \hat{\mat{X}})$ in \Eq{Eq_Dirac} are represented by their Radon transforms $\tilde{f}(r \given \vec{u})$ and $f(r \given \hat{\vec{r}}, \vec{u})$, i.e., by their 1D projections onto unit vectors $\vec{u} \in \mathbb{S}^{N-1}$.

%
%
\paragraph*{Step 2 [One-dimensional Cumulative Distributions]}
Based on the Radon transform $\tilde{f}(r \given \vec{u})$, we calculate the one-dimensional cumulative distributions of the projected densities as
\begin{equation*}
    \tilde{F}(r \given \vec{u}) = \int_{\infty}^{r} \tilde{f}(t, \vec{u}) \, \d t
\end{equation*}
and similarly for $F(r \given \vec{u})$.

%
%
For a Dirac mixture approximation, the cumulative distribution function of its Radon transform is given by
\begin{equation*}
    F(r \given \hat{\vec{r}}, \vec{u}) = \sum_{i=1}^L w_i \, \text{H} \big( r - \hat{r}_i(\vec{u}) \big) \enspace .
\end{equation*}
%

%
%
\paragraph*{Step 3 [One-dimensional Distance]}
For comparing the one-dimensional projections, we compare their cumulative distributions $\tilde{F}(r \given \vec{u})$ and $F(r \given \hat{\vec{r}}, \vec{u})$ for all $\vec{u} \in \mathbb{S}^{N-1}$. As a distance measure we use the integral squared distance
\begin{equation}
    D_1(\hat{\vec{r}}, \vec{u}) 
    = \int_\NewR \left[ \tilde{F}(r \given \vec{u}) - F(r \given \hat{\vec{r}}, \vec{u}) \right]^2 \, \d r \enspace .
\label{Eq_D1}
\end{equation}
This gives us the distance between the projected densities in the direction of the unit vector $\vec{u}$ for all $\vec{u}$.

%
%
\paragraph*{Step 4 [One-dimensional Newton Step]}
The Newton step from \SubSec{SubSec_Newton_1D} can now be written as 
\begin{equation*}
    \DeltaX{r}(\hat{\vec{r}}, \vec{u}) = - \mat{H}^{-1}(\hat{\vec{r}}, \vec{u}) \, \vec{G}(\hat{\vec{r}}, \vec{u})
\end{equation*}
with
\begin{equation*}
    \vec{G}(\hat{\vec{r}}, \vec{u})
    = \begin{bmatrix} 
        \displaystyle
        \frac{\partial D_1(\hat{\vec{r}}, \vec{u})}{\partial \hat{r}_1}, &
        \displaystyle
        \frac{\partial D_1(\hat{\vec{r}}, \vec{u})}{\partial \hat{r}_2}, &
        \ldots, &
        \displaystyle
        \frac{\partial D_1(\hat{\vec{r}}, \vec{u})}{\partial \hat{r}_L}
    \end{bmatrix}^\top
\end{equation*}
and
\begin{equation*}
    \frac{\partial D_1(\hat{\vec{r}}, \vec{u})}{\partial \hat{r}_i}
    = 2 w_i \left[ \tilde{F}(\hat{r}_i \given \vec{u}) - F(\hat{r}_i \given \hat{\vec{r}}, \vec{u}) \right]
\end{equation*}
for $i \in \{1, 2, \ldots, L\}$. The Hessian $\mat{H}(\hat{\vec{r}}, \vec{u})$ is given by
\begin{equation*}
    \mat{H}(\hat{\vec{r}}, \vec{u}) 
    = 2 \diag\left( 
        \begin{bmatrix} 
            w_1 \, \tilde{f}(\hat{r}_1 \given \vec{u}) , &
            w_2 \, \tilde{f}(\hat{r}_2 \given \vec{u}) , &
            \ldots , &
            w_L \, \tilde{f}(\hat{r}_L \given \vec{u})
        \end{bmatrix} 
    \right) \enspace .
\end{equation*}
The resulting Newton step is
\begin{equation*}
    \DeltaX{r}(\hat{\vec{r}}, \vec{u}) = -
        \begin{bmatrix} 
            \hat{\Delta}_{r, 1}(\hat{r}_1, \vec{u}) , &
            \hat{\Delta}_{r, 2}(\hat{r}_2, \vec{u}) , &
            \ldots , &
            \hat{\Delta}_{r, L}(\hat{r}_L, \vec{u})
        \end{bmatrix}^\top
\end{equation*}
with
\begin{equation*}
    \hat{\Delta}_{r, i}(\hat{r}_i, \vec{u}) =
        \displaystyle
        \frac{\tilde{F}(\hat{r}_i \given \vec{u}) 
        - F(\hat{r}_i \given \hat{\vec{r}}, \vec{u})}{\tilde{f}(\hat{r}_i \given \vec{u})}
\end{equation*}
for $i \in \{1, 2, \ldots, L\}$.
%
%
\paragraph*{Step 5 [Backprojection to ND space]}
For a specific projection vector $\vec{u}$, we obtain a Newton update $\DeltaX{x,i}(\vec{u})$. By means of a backprojection into the original $N$-dimensional space, this update can be used to modify the original Dirac locations in the direction \emph{along} the vector $\vec{u}$. For every location vector $\hat{\vec{x}}_i$, its modification $\DeltaX{x,i}(\vec{u})$ is given by
\begin{equation}
    \DeltaX{x,i}(\vec{u}) = \hat{\Delta}_{r, i}(\hat{r}_i, \vec{u}) \, \vec{u}
\label{Eq_backPropo_single}
\end{equation}
\def\DeltaXmat{\hat{\boldsymbol{\Delta}}_{x}}
for $i \in \{1, 2, \ldots, L\}$. Collecting $\DeltaX{x,i}(\vec{u})$ as column vectors in a matrix $\DeltaXmat$ gives
\begin{equation*}
\DeltaXmat(\vec{u}) = \begin{bmatrix} \DeltaX{x,1}(\vec{u}) , & \DeltaX{x,2}(\vec{u}), &
\ldots, & \DeltaX{x,L}(\vec{u})\end{bmatrix} \enspace .
\end{equation*}
Based on \Eq{Eq_backPropo_single}, it is calculated as
\begin{equation*}
    \DeltaXmat(\vec{u}) = \vec{u} \, \DeltaX{r}^\top \enspace .
\end{equation*}
%

%
%
\paragraph*{Step 6 [Assemble Multivariate Distance]}
The individual 1D distances $D_1(\vec{r}, \vec{u})$ can be assembled to form a multivariate distance measure. This is performed by averaging over all 1D distances depending on the unit vector $\vec{u}$
\begin{equation*}
    D_N(\hat{\mat{X}}) = \frac{1}{A_N} \int_{\NewS^{N-1}} D_1(\hat{\vec{r}}, \vec{u})  \, \d \vec{u} \enspace .
\end{equation*}
Here, $A_N$ is the surface area of the (hyper)sphere $\NewS^{N-1}$ embedded in $\NewR^N$. $A_N$ is given by
\begin{equation*}
    A_N = \frac{2 \, \pi^{N/2}}{\Gamma(N/2)} 
\end{equation*}
with $\Gamma(.)$ the gamma function and can be calculated recursively as
\begin{equation*}
    A_0 = 0, \; A_1 = 2, \; A_2 = 2 \pi, \; 
    A_N = \frac{2 \pi}{N-2} \, A_{N-2} \text{ for } N>2 \enspace .
\end{equation*}
Plugging in $D_1(\vec{r}, \vec{u})$ from \Eq{Eq_D1} gives
\begin{equation*}
    D_N(\hat{\mat{X}}) = \frac{1}{A_N} \int_{\mathbb{S}^{N-1}} 
    \int_\NewR \left[ \tilde{F}(r \given \vec{u}) 
    - F(r \given \hat{\vec{r}}, \vec{u}) \right]^2 \, \d r \, \d \vec{u}
\end{equation*}
with $r = \vec{u}^\top \, \vec{x}$.

%
%
\paragraph*{Step 7 [Perform Full Newton Update]}
A full Newton update can now be performed by averaging over all partial updates along projection vectors $\vec{u}$
\begin{equation*}
    \DeltaX{x,i} = \frac{1}{A_N} \int_{\mathbb{S}^{N-1}} 
    \DeltaX{x,i}(\vec{u}) \, \d \vec{u}
    \enspace .
\end{equation*}
For the matrix $\DeltaXmat$, we obtain
\begin{equation*}
    \DeltaXmat = \frac{1}{A_N} \int_{\mathbb{S}^{N-1}} \DeltaXmat(\vec{u}) \, \d \vec{u}
    \enspace .
\end{equation*}
%
%
%
In a practical implementation, the space $\mathbb{S}^{N-1}$ containing the unit vectors $\vec{u}$ has to be discretized.
%
%
Two options are available for performing the discretization: (1) Deterministic discretization, e.g., by calculating a grid or (2) random discretization by drawing uniform samples from the hypersphere.
%
%
In both cases, we consider $K$ samples $\hat{\vec{u}}_k$ and the integration reduces to a summation
\begin{equation*}
    \DeltaX{x,i} \approx \frac{1}{K} \sum_{k=1}^K \DeltaX{x,i}(\hat{\vec{u}}_k)
\end{equation*}
for $i \in \{1, 2, \ldots, L\}$ or
\begin{equation*}
    \DeltaXmat \approx \frac{1}{K} \sum_{k=1}^K \DeltaXmat(\hat{\vec{u}}_k) \enspace .
\end{equation*}
%
%
Given initial locations for the location of the Dirac components, full Newton updates are performed as
\begin{equation*}
    \hat{\mat{X}} \pluseq \DeltaXmat
\end{equation*}
until the maximum change over all location vectors
\begin{equation*}
    \max_i \left\| \DeltaX{x,i} \right\|
\end{equation*}
falls below a given threshold.

\begin{figure*}[t]
\hspace*{5mm}
\parbox{75mm}{
{\bfseries Input:}
\begin{itemize}
\setlength{\itemsep}{3pt}
\setlength{\parskip}{0pt}
\setlength{\parsep}{0pt}
\item Number of dimensions $N$
\item Number of Dirac components $L$
\item Number of discretizations $K$
\item Step size $s$
\item Initial location matrix $\hat{\mat{X}} \in \NewR^{N \times L}$
\end{itemize}
{\bfseries Output:}
\begin{itemize}
\setlength{\itemsep}{3pt}
\setlength{\parskip}{0pt}
\setlength{\parsep}{0pt}
\item Optimal location matrix $\hat{\mat{X}}$
\end{itemize}
}
\parbox{100mm}{
\begin{algorithm}[H]
\DontPrintSemicolon
\Repeat{Changes $\DeltaXmat$ small enough}{
    \tcp{Initialize matrix of location changes $\DeltaXmat$}        
    $\DeltaXmat = \mat{0}$\;
    \For{$i=1$ \KwTo $K$}{
        \tcp{Random unit vector $\vec{u} \in \NewS^{N-1}$}
        $\vec{u} = \mathrm{randn}(D)$\;        
        $\vec{u} \diveq \sqrt{\vec{u}^\top \vec{u}}$\;    
        \tcp{Projection of samples onto vector $\vec{u}$}
        $\hat{\vec{r}} = \hat{\mat{X}}^\top \vec{u}$\; 
        \tcp{Newton update along unit vector $\vec{u}$}
        $\DeltaX{r}
            = - \displaystyle\frac{\tilde{F}(\hat{\vec{r}} \given \vec{u}) 
            - F(\hat{\vec{r}} \given \hat{\vec{r}}, 
            \vec{u})}{\tilde{f}(\hat{\vec{r}} \given \vec{u})}$\;
        \tcp{Backprojection of location changes}
        $\DeltaXmat \pluseq \mathop{s} \, \vec{u} \, \DeltaX{r}^\top/K$\;
    }
    \tcp{Update location matrix $\hat{\mat{X}}$}
    $\hat{\mat{X}} \pluseq \DeltaXmat$\;
}
\end{algorithm}}
\caption{Complete algorithm for multivariate Dirac mixture approximation.}
\label{Fig_Algorithm}
\end{figure*}

The complete algorithm for multivariate Dirac mixture approximation is shown in \Fig{Fig_Algorithm}.

\section{Numerical Results} \label{Sec_Numerical}
	
	%
%
In this section, we look at some example approximations.
%
%
We focus on 2D examples for visualization purposes.
%
%
To keep the setup simple, the true continuous densities are selected as Gaussian mixture densities with two components and varying parameters.

\begin{example}[2D Gaussian mixture with two components, varying means]
We start with a Gaussian mixture with two components. Weights are equal, i.e., $w_1 = 0.5$, $w_2 = 0.5$, and both covariance matrices are identity matrices, i.e., $\mat{C}_{x,1} = \mat{I}$, $\mat{C}_{x,2} = \mat{I}$, see \Fig{Fig_DMA_50_Gauss_2D_means}. The means are selected as follows:
\begin{enumerate}
\item $\hat{\vec{x}}_1 = [\phantom{-}0.0,0.0]^\top$, $\hat{\vec{x}}_2 = [0.0,0.0]^\top$ (this corresponds to a standard normal distribution),
\item $\hat{\vec{x}}_1 = [-0.7,0.0]^\top$, $\hat{\vec{x}}_2 = [0.7,0.0]^\top$,
\item $\hat{\vec{x}}_1 = [-1.4,0.0]^\top$, $\hat{\vec{x}}_2 = [1.4,0.0]^\top$,
\item $\hat{\vec{x}}_1 = [-2.1,0.0]^\top$, $\hat{\vec{x}}_2 = [2.1,0.0]^\top$.
\end{enumerate}
The results are shown in \Fig{Fig_DMA_50_Gauss_2D_means} for an approximation with $L=50$ Dirac components. The results for $L=100$ Dirac components are shown in \Fig{Fig_DMA_100_Gauss_2D_means}.
\end{example}

\begin{example}[2D Gaussian mixture with two components, varying covariance matrices]
We start with a standard normal distribution represented by a Gaussian mixture with equal components, i.e., $w_1 = 0.5$, $w_2 = 0.5$, $\hat{\vec{x}}_1 = [0,0]^\top$, $\hat{\vec{x}}_2 = [0,0]^\top$, see \Fig{Fig_DMA_50_Gauss_2D_covs}. The covariance matrices are selected as follows:
\begin{enumerate}
\item $\mat{C}_{x,1} = \begin{bmatrix} 3.0 & 1.5 \\ 1.5 & 3.0 \end{bmatrix}$, 
$\mat{C}_{x,2} = \begin{bmatrix} 3.0 & -1.5 \\ -1.5 & 3.0 \end{bmatrix}$,
\item $\mat{C}_{x,1} = \begin{bmatrix} 3.0 & 2.0 \\ 2.0 & 3.0 \end{bmatrix}$, 
$\mat{C}_{x,2} = \begin{bmatrix} \phantom{-}3.0 & -2.0 \\ -2.0 & \phantom{-}3.0 \end{bmatrix}$,
\item $\mat{C}_{x,1} = \begin{bmatrix} 3.0 & 2.5 \\ 2.5 & 3.0 \end{bmatrix}$, 
$\mat{C}_{x,2} = \begin{bmatrix} \phantom{-}3.0 & -2.5 \\ -2.5 & \phantom{-}3.0 \end{bmatrix}$,
\item $\mat{C}_{x,1} = \begin{bmatrix} 3.0 & 2.8 \\ 2.8 & 3.0 \end{bmatrix}$, 
$\mat{C}_{x,2} = \begin{bmatrix} \phantom{-}3.0 & -2.8 \\ -2.8 & \phantom{-}3.0 \end{bmatrix}$.
\end{enumerate}
The results are shown in \Fig{Fig_DMA_50_Gauss_2D_covs} for an approximation with $L=50$ Dirac components. The results for $L=100$ Dirac components are shown in \Fig{Fig_DMA_100_Gauss_2D_covs}.
\end{example}

\QuadFigure{DMA_50_Gauss_2D_means_}{Dirac mixture approximation with $L=50$ components of Gaussian mixture approximation with two components and varying means.}{Fig_DMA_50_Gauss_2D_means}

\QuadFigure{DMA_100_Gauss_2D_means_}{Dirac mixture approximation with $L=100$ components of Gaussian mixture approximation with two components and varying means.}{Fig_DMA_100_Gauss_2D_means}

\QuadFigure{DMA_50_Gauss_2D_covs_}{Dirac mixture approximation with $L=50$ components of Gaussian mixture approximation with two components and varying covariance matrices.}{Fig_DMA_50_Gauss_2D_covs}

\QuadFigure{DMA_100_Gauss_2D_covs_}{Dirac mixture approximation with $L=100$ components of Gaussian mixture approximation with two components and varying covariance matrices.}{Fig_DMA_100_Gauss_2D_covs}

\section{Conclusions} \label{Sec_Conclude}

	%
%
This paper proposes a new integral transformation of probability density functions (PDF). The so-called Projected Cumulative Distribution (PCD) is used for characterizing both continuous and discrete random variables. It is a convenient alternative to the standard cumulative distribution function (CDF).
%
%
In multivariate settings, the CDF is non-unique, difficult to calculate, and asymmetric. This leads to various problems when used in density comparisons.
%
%
The PCD does not suffer from these problems, can be efficiently calculated, and is simple to implement.

%
%
For comparing two densities, a generalized Cram\'{e}r-von Mises distance between their two PCDs is derived.
%
%
For a given continuous density $\tilde{f}(\vec{x})$, its Dirac mixture approximation $f(\vec{x})$ is then obtained by minimizing the distance measure with respect to its Dirac locations.
%
%
Minimization is performed by combining Newton updates from all (discrete) projections.

%
%
The proposed density approximation is parallelizable. Of course, it cannot be parallelized in terms of the Dirac components as these are dependent on each other. However, the combined Newton step can be performed in parallel for each projection.

%
%
So far, we considered analytic calculation of the projections. This is only possible in special cases. For general multivariate densities, exact analytic expressions may not be possible. 
%
%
Two options are available. The first option is to approximate the given density by a more convenient density, e.g., a Gaussian mixture. Projections are then calculated for this approximate density. The second option is to perform numeric line integrals in the Radon transform. The integration only has to be performed once for a given density $\tilde{f}(\vec{x})$.


\bibliographystyle{IEEEtran}

\bibliography{%
        BibTex/Library,%
        BibTex/ISAS-Publications/ISASPublikationen,%
        BibTex/ISAS-Publications/ISASPublikationen_laufend,%
        BibTex/ISAS-Publications/ISASPreprints%
        }

\end{document}